\def\ps@headings{%
\def\@oddhead{\mbox{}\scriptsize\rightmark \hfil \thepage}%
\def\@evenhead{\scriptsize\thepage \hfil \leftmark\mbox{}}%
\def\@oddfoot{}%
\def\@evenfoot{}}
\newcommand{\F}{\mathbf{F}}
\newcommand{\C}{\mathcal{C}}
\newcommand{\N}{\mathcal{N}}
\newtheorem{theorem}{\textbf{Theorem}}
\newtheorem{lemma}[theorem]{\textbf{Lemma}}
\newtheorem{definition}[theorem]{\textbf{Definition}}
\newtheorem{remark}[theorem]{\textbf{Remark}}
\newcommand{\nix}[1]{}
\begin{document}
\title{Network Coding-based Protection  Strategies Against a Single Link Failure in Optical Networks\thanks{This
research was supported in part by grants CNS-0626741 and
CNS-0721453 from the National Science Foundation,
and a gift from Cisco Systems.}}
\author{
\authorblockN{Salah A. Aly}
\authorblockA{
Department of ECE \\Iowa State University\\ Ames, IA, 50011, USA\\Email: salah@iastate.edu} \and
\authorblockN{Ahmed E. Kamal}
\authorblockA{Department of ECE \\Iowa State University\\Ames, IA 50011, USA\\Email:
kamal@iastate.edu}
 } \maketitle

\begin{abstract}
In this paper we  develop network protection strategies  against a single link failure in optical networks. The motivation behind this work is the fact that $\%70$  of all available links in an optical network suffers from a single link failure. In the proposed protection strategies, denoted NPS-I and NPS-II, we deploy network coding and reduced capacity  on the working paths to provide a backup protection path that will carry encoded data from all sources. In addition, we provide implementation aspects and how to deploy the proposed strategies in case of an optical network with $n$ disjoint working paths.
\end{abstract}
\section{Introduction}\label{sec:intro}
One of the main services of operation networks that must be deployed
efficiently is reliability. In order to deploy a  reliable networking strategy,  the transmitted signals must be protected over unreliable
links. Link failures are common problems that might occur frequently in single and multiple operating communication circuits. In network
survivability and network resilience, one needs to design  efficient
strategies to overcome this dilemma.  Optical network survivability
techniques are classified as \emph{pre-designed protection} and
\emph{dynamic
restoration}~\cite{kamal08a,markopoulou04,somani06,zhou00}. The approach of using pre-designed protections aims to reserve enough bandwidth such that when a failure occurs, backup paths are used to reroute the transmission and  to recover the data. Examples of this approach are 1-1 and 1-N protections~\cite{kamal07a,kamal08b}. In dynamic restoration reactive strategies, capacity is not reserved. However, when the failure occurs, dynamic recovery is used to recover the data transmitted in the links that are suffered from failures. This technique does not  require preserved resources or provisioning extra paths that work in case of failure. In this work we will provide several strategies of dynamic restoration based on network coding and reduced distributed fairness capacities.

Network coding is  a powerful tool that has been recently used to
increase the throughput, capacity, security,  and performance of
communication networks. Information theoretic aspects of network coding have been investigated in~\cite{soljanin07,yeung06,ahlswede00}.  Network
coding allows the intermediate nodes not only to forward packets using
network scheduling algorithms, but also encode/decode them using algebraic primitive operations,
see~\cite{ahlswede00,fragouli06,soljanin07,yeung06},
and references therein. Network coding is used to maximize the
throughput~\cite{ahlswede00,jaggi03,koetter03}, network
capacity~\cite{ramamoorthy05,aly07e,kong07}. Also, it is robust
against packet losses and network failures\cite{ho03,lun05}, and  enhances network security and protection~\cite{jaggi07,lima06}.   It is believed that network coding will be deployed in all relay nodes and network operations.

Network protection against a single link failure (SLF) by adding one extra path has been introduced in~\cite{kamal06a,kamal07a,kamal07b}. The source
nodes are able to combine their data into a single extra path (backup
\emph{protection path}) that is used to protect all signals  on the
\emph{working paths} carrying data from all sources.  Also, protection
against multiple link failures has been presented
in~\cite{kamal08a,kamal08b,jaggi07}, where $m$ extra paths are used. In
both cases, $p$-cycles have been  used for protection against single and
multiple link failures. In this model the source nodes are assumed to send their data with a full capacity relaying on the extra paths to protect
their data. However, there are situations where the extra paths approach
might not be applicable, and one needs to design  a protection strategy
depending  solely  on the available
resources~\cite{aly08i,kamal08a,aly08preprint1,aly08patent1}.

In this work we will assume that all paths are working and adding extra
paths to the available ones is a difficult task. We apply two network
protection strategies called NPS-I and NPS-II, each of which has $(n-1)/n$ normalized network capacity. In these two strategies, we show how the
sources achieve the encoding operation and distribute their link's
capacities among them for fairness. We assume that one of the working
paths will overlap to carry encoded data, therefore,  acting as a
protection path.

In this paper we introduce a model for network protection against a single link failure  in optical networks. In this model, the network capacity will be reduced by partial factor in order to achieve the required protection. Several advantages of NPS-I and NPS-II strategies can be stated as follows.
\begin{itemize}
\item The data originated from the sources is protected without adding extra secondary paths. We assume that one of the working paths
    will act as a protection path carrying encoded data.
\item The encoding and decoding operations are achieved online with
    less computational cost at both the sources and receivers.
\item The normalized network capacity is $(n-1)/n$, which is
    near-optimal in the case of using large number of n connection
    paths.
\end{itemize}

The rest of this paper is organized as follows. In
Sections~\ref{sec:model} and~\ref{sec:terminolgoy} we present the network
model and problem setup, respectively. The definitions of the
\emph{normalized capacity}, \emph{working and protection paths} are given. In Section~\ref{sec:NPS-I-extrapaths} we present a network protection
strategy NPS-I against a single link/path failure using an extra dedicated path. In addition in Section~\ref{sec:singlefailure} we provide the
network protection strategy NPS-II which deployed network coding and
reduced capacity. The implementation aspects of NPS-I and NPS-II are
discussed in Section~\ref{sec:implementation}, and finally the paper is
concluded in Section~\ref{sec:conclusion}.

\section{Network Model}\label{sec:model}
The network model can be describe as follows.
\begin{compactenum}[i)]
\item Let $\N$ be a network represented by an abstract graph
    $G=(\textbf{V},E)$, where $\textbf{V}$ is the set of nodes and $E$
    be   set of undirected edges. Let $S$ and $R$ is a   sets of independent sources and destinations, respectively. The set $\textbf{V}=V\cup S \cup   R$ contains the relay nodes, sources, and destinations. Assume for simplicity that $|S|=|R|=n$,
    hence the set of sources is equal to the set of receivers.

\item
A path (connection) is a set of edges connected together with a
    starting node (sender) and an ending node (receiver).  $$ L_i=
    \{(s_i,e_{1i}),(e_{1i},e_{2i}),\ldots,(e_{(m)i},r_i) \},$$ where
    $1\leq i\leq n$ and $(e_{(j-1)i},e_{ji}) \in E$ for some integer $m$.

\item
The node can be a router, switch, or an end terminal depending on the network model $\N$ and the transmission layer.

\item $L$ is a set of links $L=\{L_1,L_2,\ldots,L_n\}$ carrying the data
    from the sources to the receivers as shown in Fig.~\ref{fig:nnodes}.
    All connections have the same bandwidth, otherwise a connection with
    high bandwidth can be divided into multiple connections, each of which
    has a unit capacity. There are exactly $n$ connections.

\item
Each sender $s_i \in S$ will transmit its own data $x_i$ to a receiver $r_i$ through a connection $L_i$. Also, $s_i$ will transmit encoded data $\sum_{i}^n x_i$ to $r_i$ at different time slot if it is assigned to send the encoded data.

\item The data from all sources are sent in sessions. Each session has a
    number of time slots $n$. Hence $t_\delta^\ell$ is a value at round time slot $\ell$ in session $\delta$.

\item In this model $\N$, we consider only a single link failure, it is
    sufficient to apply the encoding and decoding operations over a finite field with two elements, we denote it $\F_2=\{0,1\}$.

\item There are at least two receivers and two senders with at least two
    disjoint paths, otherwise  the protection model can not be deployed for a single path, in which it can not protect itself.
\end{compactenum}

We will define the \emph{working} and \emph{protection} paths between two
network nodes (switches and routers) in optical networks as shown in
Fig.~\ref{fig:npaths}.

\begin{definition}
The \emph{working paths} on a network with n connection paths carry
traffic under normal operations. The \emph{Protection paths} provide an
alternate backup path to carry the traffic in case of failures. A
protection scheme ensures that data sent from the sources will reach the
receivers in case of failure incidences on the working paths.
\end{definition}

\begin{figure}[t]
 \begin{center} 
  \includegraphics[scale=0.8]{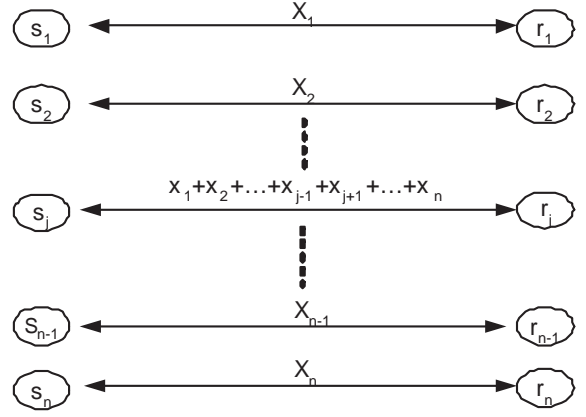}
  \caption{Network protection against a single link failure using reduced capacity and network coding. One link out of  $n$ primary links  carries encoded data.}\label{fig:nnodes}
\end{center}
\end{figure}

\section{Problem Setup and Terminology}\label{sec:terminolgoy}

We assume that there is a set of $n$ disjoint connections that requires protections with $\%100$ guaranteed against a single link failure (SLF). All connections have the same bandwidth, and each link (one hop) with a
bandwidth  can be a circuit.

Every sender $s_i$ prepares a packet \emph{$packet_{s_i \rightarrow r_i}$}
 to send to the receiver $r_i$. The packet contains the sender's ID, data
$x_i^\ell$, and a round time for every session $t^\ell_\delta$ for some
integers $\delta$ and $\ell$. We have two types of packets:
\begin{compactenum}[i)]
\item Packets sent without coding, in which the sender does not need
    to perform any coding operations. For example, in case of packets
    sent without coding, the sender $s_i$ sends the following packet
    to the receiver $r_i$:
\begin{eqnarray}
packet_{s_i \rightarrow r_i}:=(ID_{s_i},x_i^\ell,t_\delta^\ell)
\end{eqnarray}

\item  Packets sent with encoded data, in which the sender needs to
    send other senders' data. In this case, the sender $s_i$ sends the following packet to the receiver $r_i$:
\begin{eqnarray}
packet_{s_i \rightarrow r_i}:=(ID_{s_i},\sum_{j=1,j\neq i}^n x_j^\ell,t^\ell_\delta).
\end{eqnarray}
The value $y_i^{\ell}=\sum_{j=1,j\neq i}^n x_j^\ell$  is computed by every sender $s_i$, in which it is able to collect the data from all other
senders and encode them using the bit-wise operation.
\end{compactenum} In either case the sender has a full capacity in the
connection link $L_i$. We will provide more elaboration in this scenario
in Section~\ref{sec:implementation}, where implementation aspects will
be discussed.

\smallbreak

We can define the network capacity from  min-cut max-flow
information theoretic view~\cite{ahlswede00}. It can be described as
follows.
\begin{definition}\label{def:capacitylink}
The unit capacity of a connecting path $L_i$ between $s_i$ and $r_i$ is
defined as: \begin{eqnarray} c_i=\left\{
      \begin{array}{ll}
        1, & \hbox{$L_i$ is \emph{active};} \\
        0, & \hbox{otherwise.}
      \end{array}
    \right.
\end{eqnarray}
The total capacity of $\N$ is given by the summation of all path
capacities. What we mean by an \emph{active}  path is that the receiver is able to receive and process signals/messages throughout this path.
\end{definition}
\smallbreak

Clearly, if all paths are active then the total capacity is $n$ and
normalized capacity is $1$. If we assume there are n disjoint paths, then, in general, the capacity of the network for the active and failed paths is computed by:
\begin{eqnarray}
C_\N=\frac{1}{n}\sum_{i=1}^n c_i.
\end{eqnarray}

\begin{figure}[t]
 \begin{center} 
  \includegraphics[height=5cm,width=8.6cm]{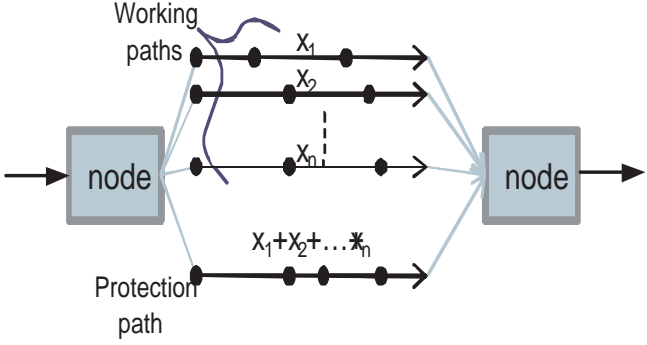} 
  \caption{Network protection against a single path failure using reduced capacity and network coding. One path out of  $n$ primary paths  carries encoded data. The black points represent various other relay nodes}\label{fig:npaths}
\end{center}
\end{figure}

There have been several techniques developed to provide network
survivability. Such techniques will add additional resources for the sake
of recovery from failures. They will also depend on the time it takes  to
recover from failures, and how much delay the receiver can tolerate.
Hence, network survivability is a multi-objective problem in terms of
resource efficiency, operation cost, and agility. Optimizing these
objectives has taken much attention recently, and has led to the design of more efficient reliable networks.

\section{Network Protections Against A SLF  Using Extra and Dedicated
Paths}\label{sec:NPS-I-extrapaths} Assume we have $n$ connections carrying data from a set of $n$ sources to a set of $n$ receivers. All connections
represent  disjoint paths, and the sources are independent of each other.
The author in~\cite{kamal06a,kamal07a} introduced a model for protecting an  optical network against a single link failure using an extra path
provision. The idea is to establish a new connection from the sources to
the receivers using virtual (secondary) source and virtual  (secondary) receiver. The goal of the secondary source is to collect data from all
other sources and encode it using the Xored operation.

The extra path that carries the encoded data from all sources is  one
cycle. In the encoding operations every source $s_i$ adds its value, and
the cycle starts at source $s_1$ and ends at source $s_n$. Hence, the encoded data after performing  the cycle or extra path is $X=\sum_{i=1}^n x_i$. The decoding operations are done at every receiver $r_i$ by adding the data $s_i$ received over the link $L_i$. The node $r_j$ with failed
connection $L_j$ will be able to recover the data $x_j$. Assuming all
operations are achieved over the binary finite field $\F_2$. Hence we
have: \begin{eqnarray}x_j= X-\sum_{i=1, i \neq j}^n x_i^\ell.\end{eqnarray}

\bigskip

\noindent \textbf{Protecting With Extra Paths (NPS-I):} We will describe
the network protection strategy NPS-I against a single link failure in
optical networks. Assume a source $s_i$ generates a message $x_i^\ell$ at
round time $t_\delta^\ell$. Put differently:
\begin{eqnarray}
packet_{s_i}=(ID_{s_i},x_i^\ell,t_\delta^\ell).
\end{eqnarray}
The $packet_{s_i}$ is transmitted from the source $s_i$ to a destination
$r_i$ for all $1 \leq i \leq n$. It is sent in the primary working path
$L_i$, i.e. a path that conveys an unencoded data. The secondary
protection path provisioned from a source $s$ to destination $r$ can
convey the message:

\begin{eqnarray}
packet_{s}=(ID_{s},\sum_{i=1}^n x_i^\ell,t_\delta^\ell).
\end{eqnarray}
This process is explained in Scheme~(\ref{eq:n-1protection1}).

\begin{eqnarray}\label{eq:n-1protection1}
\begin{array}{|c|cccccc|c|c|}
\hline
& \multicolumn{6}{|c|}{\mbox{ round time session 1 }}&\ldots&\ldots    \\
\hline
&1&2&3&\ldots&\ldots&n&\!\!\ldots&\ldots\\
\hline    \hline
  s_1 \rightarrow r_1 & x_1^1&x_1^2 &x_1^3&\ldots &\ldots  &x_1^n &\ldots&\ldots \\
    s_2 \rightarrow r_2 &  x_2^1&  x_2^2&x_2^3&\ldots&\ldots&x_2^{n} &\ldots&\ldots \\
s_3 \rightarrow r_3 &  x_3^1& x_3^2& x_3^3&\ldots&\ldots&x_3^{n} &\ldots&\ldots \\
 \vdots&\vdots&\vdots&\vdots&\vdots&\vdots&\vdots&\ldots&\ldots\\
    s_i \rightarrow r_i& x_i^1 & x_i^2 &\ldots&x_i^{i-1} &\ldots& x_i^{n}&\ldots&\ldots\\
 \vdots&\vdots&\vdots&\vdots&\vdots&\vdots&\vdots&\ldots&\ldots\\
   s_n \rightarrow r_n & x_n^1&x_n^2&x_n^3&\ldots&\ldots&x_{n}^n&\ldots&\ldots\\
\hline
    s \rightarrow r & y_1&y_2&y_3&\ldots&\ldots&y_{n}&\ldots&\ldots\\
\hline
\end{array}
\end{eqnarray}
All $y_j$'s are defined over $\F_2$ as: \begin{eqnarray} y_j=\sum_{i=1}^n
x_i^j.
\end{eqnarray}
 We  notice that the encoded data $y_j$ is fixed
per one session transmission and it is fixed for other sessions. This
means that the path $L_j$ is dedicated to sending all encoded data
$y_1,y_2,\ldots,y_n$.
\begin{lemma}
The normalized capacity of NPS-I of the network model $\N$ described
in Scheme~(\ref{eq:n-1protection1}) is given by
\begin{eqnarray}
\C=(n)/(n+1)
\end{eqnarray}
\end{lemma}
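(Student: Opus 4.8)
The plan is to read the normalized capacity as the ratio between the amount of \emph{independent} source information that NPS-I delivers and the \emph{total} transmission resource it provisions. First I would count the physical paths. Scheme~(\ref{eq:n-1protection1}) keeps the $n$ working connections $L_1,\ldots,L_n$ and, on top of them, dedicates one secondary path $s\rightarrow r$ to the encoded data; this is exactly the ``extra path'' of NPS-I. Hence the strategy reserves $n+1$ disjoint paths in all, and the denominator of the capacity should reflect all $n+1$ of them.

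Next I would separate useful throughput from redundancy. Each working path $L_i$ carries the distinct source symbol $x_i^\ell$, contributing one unit of independent information per time slot, so the $n$ working paths jointly deliver $n$ units in every round. The protection path, by contrast, transmits only $y_j=\sum_{i=1}^{n}x_i^j$, a linear combination over $\F_2$ of symbols already present on the working paths; it carries no new source information and therefore must \emph{not} be counted in the numerator. This is the content of the observation, already made after Scheme~(\ref{eq:n-1protection1}), that $y_j$ is fixed per session and the path $L_j$ is dedicated to the encoded stream.

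Then I would substitute into the capacity definition. Generalizing Definition~\ref{def:capacitylink} so that the sum of active unit capacities is normalized against the full set of $n+1$ provisioned paths, and using that under failure-free operation every working path is active ($c_i=1$ for $1\le i\le n$), I obtain
\[
\C=\frac{1}{n+1}\sum_{i=1}^{n}c_i=\frac{n}{n+1}.
\]

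The only real obstacle is pinning down the denominator: one must argue that the dedicated protection path counts toward consumed resources (pushing the normalization from $n$ to $n+1$) even though it does not count toward delivered source data. Once that bookkeeping is fixed, the numerator is simply the $n$ working paths and the result is a one-line substitution. I would also remark, for contrast, that this is why NPS-II—which overloads an existing working path instead of adding a new one—achieves the better $(n-1)/n$ rather than $n/(n+1)$.
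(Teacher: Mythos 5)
Your proposal is correct and follows essentially the same counting argument as the paper: the paper also normalizes the $n$ units of independent (unencoded) data per round against the $n+1$ provisioned disjoint paths, merely aggregating over the $n$ rounds of a session to write $\C=n^2/\bigl((n+1)n\bigr)$ before cancelling. Your per-round version and the concluding contrast with NPS-II are consistent with the paper's reasoning.
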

\begin{proof}
In every session, we  have n rounds. Furthermore, in every round there are $(n+1)$ senders with $n+1$ disjoint paths, and only one sender sends
encoded data. Therefore $\C=n^2/(n+1)n$, which gives the result.
\end{proof}

\bigskip

\noindent \textbf{Protecting Without Extra Paths:} If we do not allow an
extra path, then one of the available working paths can be used to carry
the encoded data as shown in Scheme~(\ref{eq:n-1protection2}). It  shows
that  a path $L_j$ exists which  carries the encoded data sent from the
source $s_j$ to the receiver $r_j$.

\begin{eqnarray}\label{eq:n-1protection2}
\begin{array}{|c|cccccc|c|c|}
\hline
& \multicolumn{6}{|c|}{\mbox{ round time session 1 }}&\ldots&\ldots    \\
\hline
&1&2&3&\ldots&\ldots&n&\!\!\ldots&\ldots\\
\hline    \hline
  s_1 \rightarrow r_1 & x_1^1&x_1^2 &x_1^3&\ldots &\ldots  &x_1^n &\ldots&\ldots \\
    s_2 \rightarrow r_2 &  x_2^1&  x_2^2&x_2^3&\ldots&\ldots&x_2^{n} &\ldots&\ldots \\
s_3 \rightarrow r_3 &  x_3^1& x_3^2& x_3^3&\ldots&\ldots&x_3^{n} &\ldots&\ldots \\
 \vdots&\vdots&\vdots&\vdots&\vdots&\vdots&\vdots&\ldots&\ldots\\
    s_i \rightarrow r_i& x_i^1 & x_i^2 &\ldots&x_i^{i-1} &\ldots& x_i^{n}&\ldots&\ldots\\
     \vdots&\vdots&\vdots&\vdots&\vdots&\vdots&\vdots&\ldots&\ldots\\
     s_j \rightarrow r_j & y_j^1&y_j^2&y_j^3&\ldots&\ldots&y_j^{n}&\ldots&\ldots\\
 \vdots&\vdots&\vdots&\vdots&\vdots&\vdots&\vdots&\ldots&\ldots\\
   s_n \rightarrow r_n & x_n^1&x_n^2&x_n^3&\ldots&\ldots&x_{n}^n&\ldots&\ldots\\
\hline
\hline
\end{array}
\end{eqnarray}
All $y_j^\ell$'s are defined over $\F_2$ as \begin{eqnarray}
y^\ell_j=\sum_{i=1,i\neq j}^n x_i^\ell.
\end{eqnarray}

We notice that the encoded data $y_j$ is fixed per one session
transmission but it is varied for other sessions. This means that the path $L_j$ is dedicated to send all encoded data $y_1,y_2,\ldots,y_n$.
\begin{lemma}
The normalized capacity of the network model $\N$ described
in Scheme~(\ref{eq:n-1protection2}) is given as:
\begin{eqnarray}
\C=(n-1)/n
\end{eqnarray}
\end{lemma}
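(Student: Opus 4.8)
The plan is to reuse, essentially verbatim, the counting argument that established the normalized capacity $\C = n/(n+1)$ in the NPS-I lemma, adapting it to the fact that Scheme~(\ref{eq:n-1protection2}) uses no extra path. First I would fix attention on a single session, which by the model consists of $n$ round time slots. For each such round I would read off the corresponding column of Scheme~(\ref{eq:n-1protection2}) and observe that among the $n$ disjoint active connections $L_1,\ldots,L_n$, exactly one path $L_j$ carries the encoded protection symbol $y_j^\ell = \sum_{i=1,i\neq j}^n x_i^\ell$, while the remaining $n-1$ paths each carry one genuine, independent source symbol $x_i^\ell$.

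Next I would count path-round slots over the whole session. There are $n$ rounds and $n$ paths, giving $n^2$ slots in total; of these, $n-1$ per round carry useful source data, for $(n-1)n$ useful slots. Taking the ratio of useful throughput to total available throughput yields $\C = (n-1)n / n^2 = (n-1)/n$, which is the claimed value. This parallels the NPS-I computation $\C = n^2/((n+1)n)$, the only change being that the denominator loses the extra path, dropping from $n+1$ to $n$ paths per round.

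The step that needs real care --- and the only genuine obstacle --- is justifying which slots are allowed to contribute to the normalized capacity. By the raw definition in Definition~\ref{def:capacitylink}, every one of the $n$ paths is active, so the unweighted count $\frac{1}{n}\sum_i c_i$ would simply give $1$ and miss the protection overhead entirely. I would therefore argue that the path dedicated to the encoded symbols $y_j^1,\ldots,y_j^n$ does not deliver new independent information to $r_j$ under normal operation --- its contents are a redundant combination already determined by the other sources --- and so it must be excluded from the effective (useful) throughput. Once this accounting convention is made explicit, the remaining $n-1$ information-bearing paths out of $n$ give the normalized capacity $(n-1)/n$ directly, and the result follows. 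I would close by noting that, unlike NPS-I where $y_j$ was fixed across sessions, here the encoded symbols vary from session to session; since this does not alter the per-session slot count, it leaves the capacity unchanged.
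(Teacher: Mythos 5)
Your counting argument is correct and is essentially the same slot-counting computation the paper uses: the paper actually leaves this particular lemma unproved, but it proves the identically valued NPS-II capacity lemma by exactly your argument ($n$ rounds, $n$ paths, $n-1$ information-bearing slots per round, giving $\C=(n-1)n/n^2$). Your added remark that the raw per-path capacity of Definition~\ref{def:capacitylink} must be reinterpreted to exclude the redundant encoded slot is a clarification the paper glosses over, but it does not change the approach.
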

The implementation aspects of this strategy is discussed in
Section~\ref{sec:implementation}.

\section{Network Protection Against A SLF Using Distributed Capacity and Coding}\label{sec:singlefailure}
In this section we will provide a network protection strategy against a
single link failure using distributed fairness capacity and coding. This
strategy is called NPS-II. We will compute the network capacity in each
approach and how the optimal capacity can be written with partial delay at rounds of a given session for a sender $s_i$. In~\cite{aly08preprint1} we
will also illustrate the tradeoff between the two approaches, where there
is enough space for details.

\bigskip

\noindent \textbf{NPS-II Protecting a SLF:} We will describe the NPS-II
which protects a single link failure using network coding and reduced
capacity. Assume there is a path $L_j$ that will carry the encoded data
from the source $s_j$ to the receiver $r_j$. Consider a failed link $(u,v) \in E$,  which the path $L_i$ goes through. We would like to design an
encoding scheme such that a backup copy of data on $L_i$ can also be sent
over a different path $L_j$. This process is explained in
Scheme~(\ref{eq:n-1protectionII}), and is call it Network Protection
Strategy (NPS-II) against a single Link/path failure (SLF). The data is
sent in rounds for every session. Also, we assume the failure happens only in one path throughout a session, but different paths might suffer from
failures throughout different sessions. Indeed most of the current optical networks suffer experience a single link failure~\cite{zhou00,somani06}.

The objective of the proposed network protection strategy is to withhold
rerouting the signals or the transmitted packets due to link failures.
However, we provide strategies that utilize network coding and reduced
capacity at the source nodes. We assume that the source nodes are able to
perform encoding operations and the receiver nodes are able to perform
decoding operations. We will allow the sources to provide backup copies
that will be sent through the available paths simultaneously and in the
same existing connections.

Let $x_i^j$ be the data sent from the source $s_i$ at round time $j$ in a
session $\delta$. Also, assume $y_j=\sum_{\ell=1,\ell \neq i}^n x_\ell^j$.
Put differently
\begin{eqnarray}
y_j=x_1^j\oplus x_2^j\oplus \ldots \oplus x_n^j.
\end{eqnarray}
The protection scheme  runs in sessions as explained below. For the
$(n-1)/n$ strategy presented in Scheme~(\ref{eq:n-1protectionII}), the design issues are described as follows.
\begin{compactenum}[i)]
\item A total of $(n-1)$ link disjoint paths between $(n-1)$ senders $S$ and receivers $R$ are provisioned to carry the signals from $S$ to $R$. Each path  has the unit capacity and data unit from $s_i$ in $S$ to  $r_i$ in $R$ are sent in rounds. Data unit $x_{i}^n$ is sent from
    source $s_i$ at round (n) in a specific session.
\item A server $\mathcal{S}$ is able to collect the signals from all $n$
    sources and is able to provision $y_j=\sum_{i=1, i\neq k}^n x_{i}^j$
    at round time $j$. A single source $s_k$ is used to deliver $y_j$  to
    the receiver $r_k$. This process is achieved at one particular session.  The encoded data $y_j$ is distributed equally among all $n$
    sources.
\item In the first round time at a particular session, the data $x_i^1$ is sent from $s_i$ to $r_i$ in all paths for $i=\{1,\ldots,n\}$ and $i\neq j$. Only the source $s_j$ will send $y_j$ to the receiver $r_j$ over the path $L_j$ at round $t_\delta^\ell$.
    $$y_j=\sum_{i=1,j\neq j}^n x_i^\ell .$$
\item We always neglect the communication and computational cost between
    the senders and data collector $\mathcal{S}$, as well as the receivers    and data collector $\mathcal{R}$.
\end{compactenum}
\begin{eqnarray}\label{eq:n-1protectionII}
\begin{array}{|c|cccccc|c|c|}
\hline
& \multicolumn{6}{|c|}{\mbox{ round time session 1 }}&\ldots   \\
\hline
&1&2&3&\ldots&\ldots&n&\!\!\ldots\\
\hline    \hline
  s_1 \rightarrow r_1 & x_1^1&x_1^2 &\ldots&x_1^j &\ldots  &\!\! y_n &\ldots \\
    s_2 \rightarrow r_2 &  x_2^1&  x_2^2&\ldots&x_2^j&\ldots&\!\! x_2^{n-1} &\ldots \\
s_3 \rightarrow r_3 &  x_3^1& x_3^2&\ldots& x_3^j&\ldots&\!\! x_3^{n-1} &\ldots \\
     \vdots&\vdots&\vdots&\vdots&\vdots&\vdots&\vdots&\ldots\\
     s_j \rightarrow r_j & x_j^1&x_j^2&\ldots&y_j&\ldots&\!\!x_j^{n-1}&\ldots\\
 \vdots&\vdots&\vdots&\vdots&\vdots&\vdots&\vdots&\ldots\\
  s_{n-1} \rightarrow r_{n-1} \!\! & \!\! x_{n-1}^1&\!\! y_2&\ldots&\!\! \!\!x_{n-1}^{j-1}&\ldots&\!\!x_{n-1}^{n-1}&\ldots\\
   s_n \rightarrow r_n & y_1&x_n^1&\ldots&x_n^{j-1}&\ldots&\!\!x_{n}^{n-1}&\ldots\\
\hline
\hline
\end{array}
\end{eqnarray}
In this case $y_1=\sum_{i=1}^{n-1} x_{i}^1$ and in general  $y_j$'s are
defined over $\F_2$ as
\begin{eqnarray} y_j=\sum_{i=1}^{n-j} x_i^j+\sum_{i=n-j+2}^n x_i^{j-1}.
\end{eqnarray}

The senders send packets to the set of receivers in rounds. Every packet
initiated from the sender $s_i$ contains $ID$, data $x_{s_i}$, and a round $t_\delta^\ell$. For example, the sender $s_i$ will send the
$packet_{s_i}$ as follows.

\begin{eqnarray}
packet_{s_i}=(ID_{s_i},x_{s_i},t_{\delta}^\ell).
\end{eqnarray}
Also, the sender $s_j$ will send the encoded data $y_{s_j}$ as

\begin{eqnarray}
packet_{s_i}=(ID_{s_j},y_{s_j},t_{\delta}^\ell).
\end{eqnarray}
 We ensure that the encoded data $y_{s_j}$ is varied per one round  transmission for every session. This means that the path $L_j$ is dedicated to send only one encoded data $y_j$ and all data
$x_j^1,x_j^2,\ldots,x_j^{n-1}$.
\begin{remark}
In NPS-I, the data transmitted from the sources does not experience any
round time delay. This means that the receivers will be able to decode the received packets online and immediately recover the failed data.
\end{remark}
\begin{lemma}
The normalized capacity NPS-I of the network model $\N$ described
in Scheme~(\ref{eq:n-1protectionII}) is given by
\begin{eqnarray}
\C=(n-1)/(n)
\end{eqnarray}
\end{lemma}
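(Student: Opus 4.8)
The plan is to mirror the counting argument used for the NPS-I lemma, but now accounting for the fact that no extra path is added: one of the $n$ working paths is reused to carry the encoded symbol. First I would fix attention on a single session, which by the network model consists of exactly $n$ round time slots, and observe that all transmission activity in that session is captured by the $n\times n$ array in Scheme~(\ref{eq:n-1protectionII}): each of the $n$ rows corresponds to a path $L_i$ and each of the $n$ columns to a round, giving $n^2$ path--round transmission slots in total.

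Next I would classify these slots. Reading off the scheme, in every round exactly one entry is an encoded symbol $y_j$ --- the entries follow the antidiagonal placement $s_n\!\to\! y_1$, $s_{n-1}\!\to\! y_2$, $\ldots$, $s_1\!\to\! y_n$ --- while the remaining $n-1$ entries in that column are unencoded data symbols $x_i^\ell$ that carry genuine information flow. Thus each round delivers $n-1$ useful units and sacrifices exactly one path to redundancy, for a total of $n(n-1)$ useful units per session. Equivalently, since the encoded slot rotates through all $n$ paths as the round index advances, each path $L_i$ transmits unencoded data in exactly $n-1$ of the $n$ rounds.

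To convert this count into the normalized capacity I would invoke Definition~\ref{def:capacitylink}. Averaging each path's useful activity over the session assigns it an effective unit capacity $c_i=(n-1)/n$ for every $i$, so that
\begin{eqnarray}
\C=\frac{1}{n}\sum_{i=1}^n c_i=\frac{1}{n}\cdot n\cdot\frac{n-1}{n}=\frac{n-1}{n},
\end{eqnarray}
which is the claimed value; the same conclusion follows directly from the slot count as $\C=n(n-1)/n^2$.

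The argument is essentially bookkeeping, so I do not expect a genuine obstacle. The one point requiring care is justifying that exactly one encoded symbol appears per round throughout the whole session --- and that no path is ever left idle or doubly loaded. This is precisely what the antidiagonal placement of the $y_j$'s in Scheme~(\ref{eq:n-1protectionII}) guarantees, and it is the structural feature that separates NPS-II (which reuses a working path, hence capacity $(n-1)/n$) from NPS-I with an extra dedicated path (capacity $n/(n+1)$).
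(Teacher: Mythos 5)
Your proposal is correct and follows essentially the same counting argument as the paper: in each of the $n$ rounds of a session exactly one of the $n$ path--round slots carries the encoded symbol $y_j$ and the remaining $n-1$ carry useful data, so the normalized capacity is $n(n-1)/n^2=(n-1)/n$. Your additional remarks on the antidiagonal placement of the $y_j$'s and the per-path averaging via Definition~\ref{def:capacitylink} only make explicit what the paper's proof leaves implicit.
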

\begin{proof}
We have $n$ rounds and the total number of transmitted packets in every
round is $n$. Also, in every round there are (n-1) un-encoded data
$x_1,x_2,\ldots x_{i\neq j},\ldots,x_{n}$ and only one encoded data $y_j$, for all $i=1,\ldots,n$. Hence, the capacity $c_\ell$ in every round is
$n-1$. Therefor, the normalized capacity is given by
\begin{eqnarray}
\C=\frac{\sum_{\ell=1}^n c_\ell}{n*n}=\frac{(n-1)*n}{n^2}.
\end{eqnarray}
\end{proof}

 The following lemma shows that the network protection strategy
NPS-II is in fact optimal if we consider $\F_2$. In other words, there
exist no other strategies that give better normalized capacity than
NPS-II.

\begin{lemma}
The network protection scheme NPS-II against a single link failure is
optimal.
\end{lemma}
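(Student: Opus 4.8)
The plan is to establish optimality in two parts: achievability and a matching converse. Achievability is already in hand, since the preceding lemma shows that NPS-II attains normalized capacity $\C=(n-1)/n$. It therefore remains to prove the converse, namely that no protection scheme operating on $n$ disjoint paths over $\F_2$ and guaranteeing full recovery from an arbitrary single link failure can exceed $(n-1)/n$. I would phrase this as an upper bound on the useful throughput per transmission round and then normalize exactly as in the earlier capacity computation.

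First I would fix the accounting. In any such scheme each round places exactly $n$ symbols, one per path $L_1,\ldots,L_n$, and over $\F_2$ each symbol carries at most one bit. Let the useful source information delivered in a given round have entropy $H$ (the redundant/encoded symbols contribute no fresh source information). The protection requirement is the key structural constraint: for every index $i$, if $L_i$ is the failed link then the receivers must reconstruct all source data of that round from the remaining $n-1$ symbols. Hence the round's source data is a deterministic function of every $(n-1)$-subset of the $n$ transmitted symbols.

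Next I would apply an entropy bound (equivalently, a Singleton-type argument). Conditioning on any fixed $(n-1)$-subset of the transmitted symbols determines the source data, so its entropy is at most the joint entropy of those $n-1$ binary symbols, which is at most $n-1$ bits. Thus $H \le n-1$, so at most $n-1$ of the $n$ symbols per round can convey fresh source data, i.e. $c_\ell \le n-1$ for each round $\ell$. Summing over the $n$ rounds of a session and normalizing by $n^2$, exactly as in the proof of the previous lemma, yields $\C \le (n-1)/n$. Matching this with the achievability of NPS-II gives optimality.

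The main obstacle I anticipate is making the converse robust against schemes that attempt to amortize redundancy across rounds or sessions rather than within a single round: one must argue that borrowing capacity between rounds cannot circumvent the per-round bound $c_\ell \le n-1$. I would close this gap by observing that single-link-failure recovery must succeed for every admissible failure pattern, and since the adversary may fail a different path in each round and session, the $(n-1)$-subset recovery condition applies independently in each round. The binary field assumption then makes the one-bit-per-symbol ceiling used above exact, so the bound is tight over $\F_2$ and NPS-II is optimal.
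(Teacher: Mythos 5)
The paper states this lemma but gives no proof at all --- the text immediately moves on to discussing round synchronization --- so there is nothing in the source to compare your argument against step by step. What you have supplied is a genuine converse, and it is essentially the right one: achievability comes from the preceding capacity lemma, and the upper bound is a cut-set/Singleton argument saying that if the round's source data must be a deterministic function of every $(n-1)$-subset of the $n$ transmitted binary symbols, its entropy is at most $n-1$ bits, hence $\C\le (n-1)/n$. Two points are worth tightening. First, your bound implicitly treats the $n$ receivers as a single super-receiver that sees all surviving paths; this is legitimate only because the paper's model (Section VI) posits a data collector $\mathcal{R}$ shared by all receivers, and you should say so explicitly, since without $\mathcal{R}$ an individual $r_i$ sees only its own path and the accounting changes. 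Second, your handling of amortization across rounds is the weakest step as written: rather than insisting the per-round bound applies independently (the paper's failure model actually fixes a single failed path for an entire session, so the adversary cannot fail a different path each round), it is cleaner to run the cut-set bound at the session level --- under any fixed failed link the receivers observe at most $(n-1)n$ bits per $n$-round session, so the total source entropy per session is at most $(n-1)n$ and normalization by $n^2$ gives $(n-1)/n$ regardless of how redundancy is distributed among rounds. With that repair your proof is complete and, in fact, supplies a justification the paper itself omits.
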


The transmission is done in rounds, hence  linear combinations of data
have to be from the same round time. This can be achieved using the round
time that is included in each packet sent by a sender.

\bigskip

\noindent \textbf{Encoding Process:} There are several scenarios where the
encoding operations can be achieved. The encoding and decoding operations
will depend mainly on the network topology; how the senders and receivers
are distributed in the network.
\begin{itemize}
\item The encoding operation is done at only one source $s_i$. In this
    case all other sources must send their data to $s_i$,  which will
    send encoded data over $L_i$. We assume that all sources have
    shared paths with each other.
\item If we assume there is a data distributor $\mathcal{S}$, then the
    source nodes send a copy of their data to the data distributor
    $\mathcal{S}$, in which it will decide which source will send the
    encoding data and all other sources will send their own data. This
    process will happen in every round during transmission time.
\item The encoding is done by the bit-wise operation which is the
    fastest arithmetic operation that can be performed among all
    source's data.
\item The distributor $\mathcal{S}$ will change the sender that should send the encoded data in every round of a given session.

\end{itemize}

\bigskip

\section{Implementation Aspects}\label{sec:implementation}

In this section we shall provide implementation aspects of the proposed protection strategy in case of a single link failure. The network protection strategy against a link failure is deployed in two processes: Encoding and decoding operations. The  encoding operations are performed at the set of sources, in which one or two sources will send the encoded data depending on the strategy used . The decoding operations are performed at the receivers' side, in which a receiver with a failed link had to Xor all other receivers' data in order to recover its own data. Depending on NPS-I or NPS-II the receivers will experience some delay before they can actually decode the packets. If the failure happen in the protection path of NPS-I, then the receivers do not perform any decoding operations because all working paths will convey data from the senders to receivers. However, if the failure happens in the working path, the receivers must perform decoding operations to recovery the failure using the protection path. We also note that the delay will happen only when the failure occurs in the protection paths.

The synchronization between senders and receivers are    done using the
time rounds, hence linear combinations of data have to be from the same
round time. Each packet sent by a sender has its own time and ID. In this
part we will assume that there is a data distributor $\mathcal{S}$ at the
sources side and data distributor $\mathcal{R}$ at the receivers side.

\bigskip

\noindent \textbf{Encoding Process:} The encoded process of the proposed
protection strategies can be done as follows.
\begin{itemize}
\item The source nodes send a copy of their data to the data
    distributor $\mathcal{S}$, then $\mathcal{S}$ will decide which
    source will send the encoding data and all other sources will send
    their own data. This process will happen in every round during
    transmission time.
\item The encoding is done by the bit-wise operation which is the
    fastest arithmetic operation that can be performed among all
    source's data.
\item The server $\mathcal{S}$ will change the sender that should send
    the encoded data in every round of a given session.
\item This process will be repeated in every session during
    transmission till all sources send their data.
\end{itemize}

\bigskip

\noindent \textbf{Decoding Process:} The decoding process is done in a
similar way as the encoding process. We assume there is a data distributor server $\mathcal{S}$ that assigns the senders that will send only their
own data as shown in Fig.~\ref{fig:impl1failure}. In addition
$\mathcal{S}$ will encode the data from all senders and distribute it only to the sender that will transmit the encoded data over its path.  The
objective is to withhold rerouting the signals or the transmitted packets
due to link failures. However, we provide strategies that utilize network
coding and reduced capacity at the source nodes.

We assume there is a data distributor $\mathcal{R}$ that will collect data from all working and protection paths and is able to perform the decoding
operations.   In this case we assume that all receivers $R$ have available shared   paths with the data collector $\mathcal{R}$. At the receivers
side, if there is  one single failure in a path $L_k$, then there are
several situations.

\begin{itemize}
\item If the path $L_k$ carries data without
encoding (it is a working path), then the data distributor
$\mathcal{R}$ must query all other nodes in order to recover the
failed data. In this case $r_k$ must query $\mathcal{R}$ to retrieve
its data.

\item If the path $L_k$ carries encoded data $y_k$, then it does not
    need to perform any action, since $y_k$ is used for protection and does not have any valued data. \end{itemize}

\begin{figure}[t]
\begin{center}
\includegraphics[width=8.6cm,height=4.2cm]{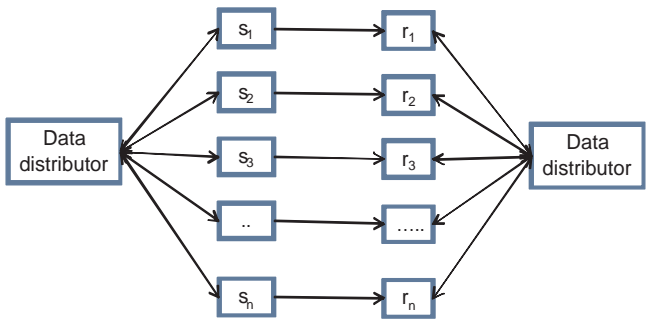}
\caption{An example that shows network protection against single failure using reduced capacity.}\label{fig:impl1failure}
\end{center}
\end{figure}

\bigskip

\section{Conclusion}\label{sec:conclusion}
In this paper we presented a strategy for network protection against a single link failure in optical networks. We showed that protecting a single link failure in optical networks can be achieved using network coding and reduced capacity. In addition, we provided implementation aspects of the proposed network protection strategies.

\scriptsize

\bibliographystyle{ieeetr}

\begin{thebibliography}{24}

\bibitem{ahlswede00}
R.~Ahlswede, N.~Cai, S.-Y.~R. Li, and R.~W. Yeung.
\newblock Network information flow.
\newblock {\em IEEE Trans. Inform. Theory}, 46:1204--1216, 2000.

\bibitem{aly08i}
S.~A. Aly and A.~E. Kamal.
\newblock Network protection codes against link failures using network coding.
\newblock In {\em Proc. IEEE GlobelComm '08, New Orleans, LA}, December 2008.

\bibitem{aly08preprint1}
S.~A. Aly and A.~E. Kamal.
\newblock Reduced capacity coding-based network protection strategies.
\newblock {\em IEEE Transaction on Networking, under submission}, 2008.

\bibitem{aly08patent1}
S.~A. Aly and A.~E. Kamal.
\newblock Network protection strategies against single and multiple link
  failures using reduced capacity and network coding.
\newblock {\em submited US Patent}, filed on May 04, 2008.

\bibitem{aly07e}
S.~A. Aly, V.~Kapoor, J.~Meng, and A.~Klappenecker.
\newblock Bounds on the network coding capacity for wireless random networks.
\newblock In {\em Third Workshop on Network Coding, Theory, and Applications,
  NetCod'07}, January 2007.

\bibitem{fragouli06}
C.~Fragouli, J.~Le Boudec, and J.~Widmer.
\newblock Network coding: An instant primer.
\newblock {\em ACM SIGCOMM Computer Communication Review}, 36(1):63--68, 2006.

\bibitem{ho03}
T.~Ho, R.~Koetter, M.~Medard, D.~Karger, and M.~Effros.
\newblock The benefits of coding over routing in a randomized setting.
\newblock In {\em Proc. of the IEEE International Symposium on Information
  Theory (ISIT03)}, page 442, Yokohama, Japan, June 2003.

\bibitem{jaggi07}
S.~Jaggi, M.~Langberg, S.~Katti, T.~Ho, D.~Katabi, and M.~Medard.
\newblock Resilient network coding in the presence of byzantine adversaries.
\newblock In {\em Proc. IEEE INFOCOM}, 2007.

\bibitem{jaggi03}
S.~Jaggi, P.~Sanders, A.~Chou, M.~Errfos, S.~Egner, K.~Jain, and L.~Tolhuizen.
\newblock Polynomial time algorithms for multicast network code construction.
\newblock In {\em IEEE Tran. Inform. Theory}, 2003.

\bibitem{kamal06a}
A.~E. Kamal.
\newblock {1+N} protection in optical mesh networks using network coding on
  p-cycles.
\newblock In {\em Proc. of the IEEE Globecom}, 2006.

\bibitem{kamal07a}
A.~E. Kamal.
\newblock {1+N} protection against multiple faults in mesh networks.
\newblock In {\em Proc. of the IEEE International Conference on Communications
  {(ICC)}}, 2007.

\bibitem{kamal07b}
A.~E. Kamal.
\newblock Gmpls-based hybrid {1+N} link protection over \emph{p}-cycles: Design
  and performance.
\newblock In {\em Proc. of IEEE Globecom}, 2007.

\bibitem{kamal08a}
A.~E. Kamal.
\newblock {1+N} network protection for mesh networks: network coding-based
  protection using p-cycles.
\newblock {\em submitted to IEEE Journal of Communication}, 2008.

\bibitem{kamal08b}
A.~E. Kamal.
\newblock A generalized strategy for {1+N} protection.
\newblock In {\em Proc. of the IEEE International Conference on Communications
  (ICC)}, 2008.

\bibitem{koetter03}
R.~Koetter and M.~Medard.
\newblock An algebraic approach to network coding.
\newblock {\em IEEE/ACM transactions on networking}, 2003.

\bibitem{kong07}
Z.~Kong, S.A. Aly, E.~Soljanin, E.~Yan, and A.~Klappenecker.
\newblock Network coding capacity of random wireless networks under a signal-to-interference-and-noise model.
\newblock {\em Proceedings of the 45th Allerton Conference on Communication,
  Control, and Computing, Urbana, IL}, September 2007.
\newblock arXiv:cs-IT/0708.3070v.

\bibitem{lima06}
L.~Lima, M.~Medard, and J.~Barrows.
\newblock Random linear network coding: A free cipher.
\newblock In {\em ISIT 06}, 2006.

\bibitem{lun05}
D.~S. Lun, N.~Ranakar, R.~Koetter, M.~Medard, E.~Ahmed, and H.~Lee.
\newblock Achieving minimum-cost multicast: A decentralized approach based on
  network coding.
\newblock In {\em In Proc. the 24th IEEE INFOCOM}, volume~3, pages 1607-- 1617,
  March 2005.

\bibitem{markopoulou04}
A.~Markopoulou, S.~Iannaccone, G.~Bhattacharyya, C.~N. Chuah, and C.~Diot.
\newblock Characterization of failures in an ip backbone network.
\newblock In {\em Proc. of IEEE INFOCOM '04}, March 2004.

\bibitem{ramamoorthy05}
A.~Ramamoorthy, J.~Shi, and R.~D. Wesel.
\newblock On the capacity of network coding for random networks.
\newblock {\em IEEE Trans. Info. Theory}, 51(8):2878--2885, Aug. 2005.

\bibitem{soljanin07}
E.~Soljanin and C.~Fragouli.
\newblock Network coding—information flow perspective.
\newblock 2007.

\bibitem{somani06}
A.~K. Somani.
\newblock {\em Survivability and traffic grooming in Optical Networks}.
\newblock Cambridge Press, 2006.

\bibitem{yeung06}
R.~W. Yeung, S.-Y.~R. Li, N.~Cai, and Z.~Zhang.
\newblock {\em Network Coding Theory}.
\newblock Now Publishers Inc., 2006.

\bibitem{zhou00}
D.~Zhou and S.~Subramaniam.
\newblock Survivability in optical networks.
\newblock {\em IEEE network}, 14:16--23, Nov./Dec. 2000.

\end{thebibliography}

\end{document}